\numberwithin{equation}{section}
\numberwithin{figure}{section}
\numberwithin{table}{section}
  \theoremstyle{plain}
  \newtheorem{prop}{\protect\propositionname}
\renewenvironment{proof}[1][\proofname]{\par
  \pushQED{\qed}%
  \normalfont \topsep6\p@\@plus6\p@\relax
  \list{}{%
    \settowidth{\leftmargin}{\itshape\proofname:\hskip\labelsep}%
    \setlength{\labelwidth}{0pt}%
    \setlength{\itemindent}{-\leftmargin}%
  }%
  \item[\hskip\labelsep\itshape#1\@addpunct{:}]\ignorespaces
}{%
  \popQED\endlist\@endpefalse
}
\providecommand{\keywords}[1]{\textsc{\textit{Keywords---}} #1}
\newcommand{\idd}{\mathbb{1}}
\date{}
\providecommand{\propositionname}{Proposition}
\begin{document}

\title{Entropy driven transformations of statistical hypersurfaces}

\author{Mario Angelelli and Boris Konopelchenko}
\maketitle
\begin{center}
Department of Mathematics and Physics ``Ennio De Giorgi'', 
\par\end{center}

\begin{center}
University of Salento 
\par\end{center}

\begin{center}
and INFN, Sezione di Lecce
\par\end{center}

\begin{center}
Lecce 73100, Italy 
\par\end{center}
\begin{abstract}
Deformations of geometric characteristics of statistical hypersurfaces
governed by the law of growth of entropy are studied. Both general
and special cases of deformations are considered. The basic structure of the statistical hypersurface is explored through a differential relation for the variables, and connections with the replicator dynamics for Gibbs' weights are highlighted. Ideal and super-ideal cases are analysed, also considering their integral characteristics.
\end{abstract}
\keywords{Hypersurface, entropy, deformation}

\textit{Mathematics Subject Classification 2000}: 51P05, 82B05, 05D05

\section{\label{sec: Introduction} Introduction}

Surfaces, hypersurfaces and their dynamics (deformations) are key
ingredients in a broad variety of mathematical problems and phenomena
in physics (see e.g. \cite{Darboux1890,Eisenhart1909,Yau2016,Nelson2004,Gross1992,David1996}).
In mathematics, the mechanisms governing the deformation of surfaces
vary from the classical one, which preserves certain characteristics
of surfaces (see e.g. \cite{Darboux1890,Eisenhart1909,Yau2016}),
to those described by integrable partial differential equations \cite{Konopelchenko1996,Konopelchenko1998,Konopelchenko2000}.
In physics, surfaces and hypersurfaces are forced to deform due to
certain effects varying from a simple change of pressure inside a
soap bubble to the interaction of world-sheets with background in
string theory (see e.g. \cite{Nelson2004,Gross1992,David1996}). 

Recently we considered \cite{AK2016} a special class of hypersurfaces
inspired by the basic formula for the free energy in statistical physics
(see e.g. \cite{LL1980}). This approach is aimed at identifying and, then, relating the microscopic and macroscopic sectors in statistical models. Such a purpose suggests introducing more spaces and a mapping relating them. The focus on mappings between spaces allows one to deal with \emph{families} of statistical models, which is a useful approach in the analysis of complex systems. 

In \cite{AK2016} we focused on parametric families of systems. In particular, statistical hypersurfaces have been defined by the formula 
\begin{equation}
x_{n+1}=\ln\left(\sum_{\alpha=1}^{m}e^{f_{\alpha}(x_{1},\dots,x_{n})}\right)\label{eq: basic free energy formula}
\end{equation}
where $x_{1},\dots,x_{n},x_{n+1}$ are Cartesian coordinates in $\mathbb{R}^{n+1}$ and $f_{\alpha}(x_{1},\dots,x_{n})$ are certain functions. 
The mapping (\ref{eq: basic free energy formula}) leads to the representation of the set of parameters as a $n$-dimensional hypersurface embedded in a $(n+1)$-dimensional ambient space. If the ambient space is endowed with a metric structure, this hypersurface inherits metric properties, which can be used to study and compare characteristic features of statistical models defined by different parameters. 
The role of physical interactions and phase transitions \cite{LL1980} can be easily discussed in this geometric framework. In particular, the notion on ``ideal model'' can be related to a special (translational) symmetry. On the other hand, the basic geometric characteristics of such hypersurfaces have rather special properties, which can be expressed in terms of mean values calculated with Gibbs' distribution 
\begin{equation}
w_{\alpha}:=\frac{e^{f_{\alpha}}}{\sum_{\beta=1}^{m}e^{f_{\beta}}},\quad\alpha=1,\dots,m.\label{eq: Generalized Gibbs' weights}
\end{equation}
Within such probabilistic view, the entropy defined by the standard formula \cite{LL1980} 
\begin{equation}
S:=-\sum_{\alpha=1}^{m}w_{\alpha}\ln w_{\alpha}\label{eq: generalized entropy}
\end{equation}
has a simple geometrical meaning. Namely \cite{AK2016}, 
\begin{equation}
S=x_{n+1}-\overline{f}\label{eq: generalized entropy, geometric}
\end{equation}
where $\overline{f}$ is the mean value $\overline{f}:=\sum_{\alpha=1}^{m}w_{\alpha}f_{\alpha}$.
For the super-ideal case $m=n$ and $f_{\alpha}=x_{\alpha}$, the
entropy (\ref{eq: generalized entropy, geometric}) is $S=\sqrt{\det\mathbf{g}}\vec{X}\cdot\vec{N}$,
where $\vec{X}$ and $\vec{N}$ are the position vector and the normal
vector at the point of the hypersurface (\ref{eq: basic free energy formula}),
respectively. 

The natural appearance of entropy (\ref{eq: generalized entropy})-(\ref{eq: generalized entropy, geometric}) for the statistical hypersurfaces (\ref{eq: basic free energy formula})
suggests to consider the class of their deformations, transformations,
and dynamics which are governed by the most fundamental property of
entropy, namely, by the law of growth of entropy for isolated macroscopic
systems \cite{LL1980}. 

The present paper is devoted to the analysis of geometrical characteristics
of such entropy-driven transformations of statistical hypersurfaces.
It is shown that the general deformations of statistical hypersurfaces
corresponding to increasing entropy are characterised by the conditions
\[
\overline{\delta f}>\delta\overline{f},\quad\overline{f\cdot\delta f}<\overline{f}\cdot\delta\overline{f}.
\]
It is also highlighted that the variation of principal curvatures
is linked to the tendency to stability or to a departure from it. 

We extend the analysis in \cite{AK2016} discussing two classes of deformations. The first class includes deformations due to variations of variables $\boldsymbol{x}$ (parametric transformations), while the second class includes non-parametric transformations where functions $f_{\alpha}$ are changed. The case of affine functions
$f_{\alpha}$ and, in particular, linear and super-ideal cases are
studied in more details. It is shown that, in the linear case, the
variation of the entropy between two regions of different statistical
hypersurfaces is proportional to the volume of a domain in $\mathbb{R}^{n+1}$
bounded by them. A connection with replicator dynamics is also noted. 

The paper is organized as follows. Some general formulas for statistical
hypersurfaces are presented in Section \ref{sec: Statistical hypersurfaces}.
General variations of entropy and their properties are considered
in Section \ref{sec: General variations of entropy }. The interrelation
between principal curvatures and the second law of thermodynamics
is discussed in Section \ref{sec: Principal curvatures and entropy}. 
The effects of deformations on statistical weights are investigated in Section \ref{sec: Effects of deformations on statistical weights} in terms of a differential characterization of the statistical mapping and a relation with the replicator dynamics.
Ideal and super-ideal statistical hypersurfaces are studied in Section
\ref{sec: Ideal cases}. Some possible future developments are noted in Section \ref{sec: Conclusion}. 

\section{Statistical hypersurfaces and their deformations}
\label{sec: Statistical hypersurfaces}

\subsection{Brief summary on statistical hypersurfaces and their tropical limit}
\label{subsec: Brief summary on statistical hypersurfaces and their tropical limit}

Here, for convenience, we first reproduce some basic results of the
paper \cite{AK2016}. In the rest of the paper, we will denote the local coordinates in $\mathbb{R}^{n}$ and in $\mathbb{R}^{m}$ as $\boldsymbol{x}$ and $\vec{f}$, respectively.

The induced metric $\mathbf{g}$ of the hypersurface $V_{n}\subseteq\mathbb{R}^{n+1}$
defined by the formula (\ref{eq: basic free energy formula}) is 
\begin{equation}
g_{ik}=\delta_{ik}+\overline{f_{i}}\cdot\overline{f_{k}}\label{eq: metric hypersurface}
\end{equation}
where 
\begin{equation}
\overline{f_{i}}:=\sum_{\alpha=1}^{m}w_{\alpha}\frac{\partial f_{\alpha}}{\partial x_{i}},\quad i=1,\dots,n.\label{eq: averages derivatives}
\end{equation}
The position vector $\vec{X}$ for a point on $V_{n}$ and the corresponding
normal vector $\vec{N}$ are 
\begin{align}
\vec{X}= & (x_{1},\dots,x_{n},x_{n+1}),\nonumber \\
\vec{N}= & \frac{1}{\sqrt{\det\mathbf{g}}}\left(-\overline{f_{1}},\dots,-\overline{f_{n}},1\right)\label{eq: position and normal vectors}
\end{align}
where $\det\mathbf{g}=1+\sum_{i=1}^{n}\overline{f_{i}}^{2}$. 

The second fundamental form $\Omega_{ik}$ is given by 
\begin{equation}
\Omega_{ik}=\frac{1}{\sqrt{\det\mathbf{g}}}\left(\overline{f}_{(ik)}-\overline{f_{i}}\cdot\overline{f_{k}}\right)\label{eq: second fundamental form components}
\end{equation}
where 
\begin{equation}
\overline{f}_{(ik)}:=\sum_{\alpha=1}^{m}w_{\alpha}\left(\frac{\partial^{2}f_{\alpha}}{\partial x_{i}\partial x_{k}}+\frac{\partial f_{\alpha}}{\partial x_{i}}\cdot\frac{\partial f_{\alpha}}{x_{k}}\right),\quad i,k=1,\dots,n.\label{eq: correlator}
\end{equation}

The Riemann curvature tensor is 
\begin{equation}
R_{iklj}=\Omega_{il}\Omega_{kj}-\Omega_{kl}\Omega_{ij},\quad i,k,l,j=1,\dots,n.\label{eq: Riemann curvature tensor covariant components}
\end{equation}
and the Gauss-Kronecker curvature is given by 
\begin{equation}
K=\frac{\det\left|\overline{f}_{(ik)}-\overline{f_{i}}\cdot\overline{f_{k}}\right|}{\left(1+\sum_{l=1}^{n}\overline{f_{l}}^{2}\right)^{\frac{n+2}{2}}}.\label{eq: Gauss-Kronecker curvature}
\end{equation}
The entropy $S$ (\ref{eq: generalized entropy, geometric}) is equivalent
to \cite{AK2016} 
\begin{equation}
S=\sqrt{\det\mathbf{g}}\vec{X}\cdot\vec{N}+\sum_{\alpha=1}^{m}w_{\alpha}\left(\sum_{i=1}^{n}x_{i}\frac{\partial f_{\alpha}}{\partial x_{i}}-f_{\alpha}\right)\label{eq: entropy geometric formula}
\end{equation}
which takes the form 
\begin{equation}
S=\sqrt{\det\mathbf{g}}\vec{X}\cdot\vec{N}\label{eq: entropy geometric formula, super-ideal}
\end{equation}
in the super-ideal case ($f_{\alpha}=x_{\alpha}$). In all these formulae,
a hypersurface associated with the physical system is specified by
the choice of functions $f_{\alpha}(\boldsymbol{x})$, while $x_{1},\dots,x_{n}$
are local coordinates (parameters which characterise the state of
the system). 

The definition of a geometric framework to analyse statistical models also gives a natural setting to study their \emph{tropical limit}, which is related to Maslov's dequantization and ultradiscretization procedures. Tropical structures can be obtained from standard (real or complex) ones through the limit $\varepsilon\rightarrow0^{+}$ for the following family of ring operations
\begin{eqnarray}
x\oplus_{\varepsilon}y & := & \varepsilon\cdot\ln\left(e^{\frac{x}{\varepsilon}}+e^{\frac{x}{\varepsilon}}\right),\nonumber \\
x\odot_{\varepsilon}y & := &\varepsilon\cdot\ln\left(e^{\frac{x}{\varepsilon}}\cdot e^{\frac{x}{\varepsilon}}\right)=x+y.
\label{eq: tropical limit}
\end{eqnarray}

The exploration of the tropical limit in statistical physics naturally follows from the expression (\ref{eq: basic free energy formula}). In this case, the Boltzmann constant $k_{B}$ plays the role of the parameter $\varepsilon$ in the formal limit (\ref{eq: tropical limit}). A first study in this direction was carried out in \cite{AK2015} with $f_{\alpha}=\frac{T\cdot S_{\alpha}-E_{\alpha}}{k_{B}T}$, where $T$ is the absolute temperature, $\{E_{\alpha}\}$ is the energy spectrum of the system, and $\exp\frac{S_{\alpha}}{k_{B}}$ represents the degeneracy associated with $E_{\alpha}$. This approach offers new techniques to deal with systems with highly degenerated energy levels, e.g. frustrated systems. 

The tropical limit of statistical models has been extended to more general cases in \cite{AK2016} to meet the formalism of statistical hypersurfaces. This extension has led to the identification of different kinds of tropical limit based on the choice of independent variables $\boldsymbol{x}$ or dependent ones $f_{\alpha}$ for the limit procedure. 

An investigation of formal tropical structures in the context of statistical physics has been carried out in \cite{Angelelli2017} focusing on algebraic aspects.

\subsection{General deformations of statistical hypersurfaces}
\label{subsec: General deformations of statistical hypersurfaces}

Now let us consider deformations of a statistical hypersurface generated
by general infinitesimal variations of functions $f_{\alpha}\mapsto f_{\alpha}+\delta f_{\alpha}$,
$\alpha=1,\dots,m$. Using the expressions (\ref{eq: metric hypersurface})-(\ref{eq: Gauss-Kronecker curvature}),
one easily gets the general variations of these quantities: 
\begin{equation}
\delta g_{ik}=\delta\overline{f_{i}}\cdot\overline{f_{k}}+\overline{f_{i}}\cdot\delta\overline{f_{k}},\label{eq: variation metric}
\end{equation}
\begin{equation}
\delta\Omega_{ik}=-\frac{1}{2}\frac{\text{tr}(\delta\mathbf{g})}{(\det\mathbf{g})^{3/2}}\cdot\left(\overline{f}_{(ik)}-\overline{f_{i}}\cdot\overline{f_{k}}\right)+\frac{1}{(\det\mathbf{g})^{1/2}}\left(\delta\overline{f}_{(ik)}-\delta g_{ik}\right)\label{eq: variation second fundamental form}
\end{equation}
and 

\begin{equation}
\delta K=\frac{\delta(\det\left|\overline{f}_{(ik)}-\overline{f_{i}}\cdot\overline{f_{k}}\right|)}{\det\mathbf{g}^{(n+2)/2}}-(n+2)\cdot\frac{\det\left|\overline{f}_{(ik)}-\overline{f_{i}}\cdot\overline{f_{k}}\right|\cdot\text{tr}(\delta\mathbf{g})}{\det\mathbf{g}^{(n+4)/2}}.\label{eq: variation Gauss-Kronecker curvature}
\end{equation}
In cases where the Hessian matrix of $F$ 
\begin{equation}
\boldsymbol{\partial^{2}}F:=\left\llbracket \overline{f}_{(ik)}-\overline{f_{i}}\cdot\overline{f_{k}}\right\rrbracket _{i,k}\label{eq: Hessian of F}
\end{equation}
is invertible, one can use the well-known identity for the variation
of matrices 
\[
\det\left(\boldsymbol{\partial^{2}}F+\delta\boldsymbol{\partial^{2}}F\right)=\det\left(\boldsymbol{\partial^{2}}F\right)\cdot\text{tr}\left(\boldsymbol{\partial^{2}}F^{-1}\cdot\delta\boldsymbol{\partial^{2}}F\right)
\]
and get 
\begin{align}
\delta K= & \frac{\text{tr}\left(\left\llbracket \overline{f}_{(ik)}-\overline{f_{i}}\cdot\overline{f_{k}}\right\rrbracket _{i,k}^{-1}\cdot\left\llbracket \delta\overline{f}_{(ik)}-\delta\overline{f_{i}}\cdot\overline{f_{k}}-\overline{f_{i}}\cdot\delta\overline{f_{k}}\right\rrbracket _{i,k}\right)}{\det\left(\mathbf{g}^{(n+2)/2}\cdot\left\llbracket \overline{f}_{(ik)}-\overline{f_{i}}\cdot\overline{f_{k}}\right\rrbracket _{i,k}^{-1}\right)}\nonumber \\
 & -(n+2)\cdot\frac{\det\left\llbracket \overline{f}_{(ik)}-\overline{f_{i}}\cdot\overline{f_{k}}\right\rrbracket _{i,k}\cdot\text{tr}(\delta\mathbf{g})}{\det\mathbf{g}^{(n+4)/2}}.\label{eq: variation Gauss-Kronecker curvature, H!=00003D0}
\end{align}
Note that, if $K=0$ and some conditions on the original functions
are assumed, then the first variation of $K$ vanishes too. For variations
of the Riemann tensor we have 
\begin{align}
\delta R_{iklj}= & \frac{\text{tr}(\delta\mathbf{g})}{(\det\mathbf{g})^{2}}\cdot\left\{ \left(\overline{f}_{(kl)}-\overline{f_{k}}\cdot\overline{f_{l}}\right)\left(\overline{f}_{(ij)}-\overline{f_{i}}\cdot\overline{f_{j}}\right)-\left(\overline{f}_{(il)}-\overline{f_{i}}\cdot\overline{f_{l}}\right)\left(\overline{f}_{(kj)}-\overline{f_{k}}\cdot\overline{f_{j}}\right)\right\} \nonumber \\
+ & \frac{(\delta\overline{f}_{(il)}-\delta g_{il})\cdot(\overline{f}_{(kj)}-\overline{f_{k}}\cdot\overline{f_{j}})+(\delta\overline{f}_{(kj)}-\delta g_{kj})\cdot(\overline{f}_{(il)}-\overline{f_{i}}\cdot\overline{f_{l}})}{\det\mathbf{g}}.\nonumber \\
- & \frac{(\delta\overline{f}_{(kl)}-\delta g_{kl})\cdot(\overline{f}_{(ij)}-\overline{f_{i}}\cdot\overline{f_{j}})+(\delta\overline{f}_{(ij)}-\delta g_{ij})\cdot(\overline{f}_{(kl)}-\overline{f_{k}}\cdot\overline{f_{l}})}{\det\mathbf{g}}.\label{eq: variation Riemann tensor}
\end{align}
while the scalar curvature \cite{AK2016} varies as 
\begin{align}
\delta R= & 2\cdot\text{tr}\boldsymbol{\Omega}\cdot\text{tr}(\delta\boldsymbol{\Omega})-\text{tr}(\delta\boldsymbol{\Omega}^{2})+2\frac{\delta((\boldsymbol{\overline{f}})^{T}\cdot\boldsymbol{\Omega}^{2}\cdot\boldsymbol{\overline{f}})\cdot(1-\text{tr}(\boldsymbol{\Omega}))-\text{tr}(\delta\boldsymbol{\Omega})\cdot((\boldsymbol{\overline{f}})^{T}\cdot\boldsymbol{\Omega}\cdot\boldsymbol{\overline{f}})}{\det\mathbf{g}}\nonumber \\
 & -2\frac{(\boldsymbol{\overline{f}})^{T}\cdot\boldsymbol{\Omega}^{2}\cdot\boldsymbol{\overline{f}}-\text{tr}(\boldsymbol{\Omega})\cdot((\boldsymbol{\overline{f}})^{T}\cdot\boldsymbol{\Omega}\cdot\boldsymbol{\overline{f}})}{\det\mathbf{g}^{2}}\cdot(\boldsymbol{\overline{f}}\cdot\delta\boldsymbol{\overline{f}}).\label{eq: variation scalar curvature}
\end{align}

\section{\label{sec: General variations of entropy } General variations of
entropy }

Now we will consider variations of entropy generated by general infinitesimal
variations of the functions $f_{\alpha}\mapsto f_{\alpha}+\delta f_{\alpha}$,
$\alpha=1,\dots m$. The formula (\ref{eq: generalized entropy, geometric})
implies that 
\begin{equation}
\delta S=\overline{\delta f}-\delta\overline{f}\label{eq: variaton entropy intertwining}
\end{equation}
where $\overline{\delta f}:=\sum_{\alpha=1}^{m}w_{\alpha}\delta f_{\alpha}$. 

Variations of $f_{\alpha}$ generate the variations of probabilities
$w_{\alpha}$, namely, 
\begin{align}
\delta w_{\alpha}= & \sum_{\beta=1}^{m}(\delta_{\alpha\beta}w_{\alpha}-w_{\alpha}w_{\beta})\delta f_{\beta}\nonumber \\
= & w_{\alpha}\cdot(\delta f_{\alpha}-\overline{\delta f}).\label{eq: variation Gibbs' weights}
\end{align}
Since $\delta S=-\sum_{\alpha=1}^{m}\delta w_{\alpha}\cdot f_{\alpha}$,
using (\ref{eq: variation Gibbs' weights}) one also gets 
\begin{align}
\delta S= & -\sum_{\alpha=1}^{m}w_{\alpha}f_{\alpha}\delta f_{\alpha}+\left(\sum_{\alpha=1}^{m}w_{\alpha}f_{\alpha}\right)\cdot\left(\sum_{\beta=1}^{m}w_{\beta}\delta f_{\beta}\right)\nonumber \\
= & -\frac{1}{2}\cdot\overline{\delta(f{}^{2})}+\overline{f}\cdot\overline{\delta f}.\label{eq: entropy variation and fluctuations}
\end{align}
Thus, deformations of statistical hypersurfaces driven by the law
of growth of entropy ($\delta S>0$) are characterized by the conditions
\begin{align}
\overline{\delta f}> & \delta\overline{f},\label{eq: entropy variation direction, deg 1}\\
\overline{f\cdot\delta f}< & \overline{f}\cdot\delta\overline{f}.\label{eq: entropy variation direction, deg 2}
\end{align}
These inequalities select ``thermodynamic'' deformations of statistical
hypersurfaces among all possible. 

In the limiting case $\delta S=0$, one gets 
\begin{align}
\overline{\delta f}= & \delta\overline{f},\label{eq: entropy variation reversible, deg 1}\\
\overline{f\cdot\delta f}= & \overline{f}\cdot\delta\overline{f}.\label{eq: entropy variation reversible, deg 2}
\end{align}
Deformations defined by the conditions (\ref{eq: entropy variation reversible, deg 1})-(\ref{eq: entropy variation reversible, deg 2})
correspond to the equilibrium states in statistical systems or reversible
processes \cite{LL1980}. 

The variation of the entropy (\ref{eq: entropy variation and fluctuations})
can also be deduced by the covariance of two random variables, namely,
the variable $\vec{f}\in\mathbb{R}^{m}$ which assumes the value $f_{\alpha}$
with probability $w_{\alpha}$, and the associated variation that
occurs with the same probability $p(\delta f=\delta f_{\alpha})=w_{\alpha}$.
The constraint $\delta S\geq0$ on the sign of the variation of entropy
(\ref{eq: entropy variation and fluctuations}) means that these two
random variables have to be negatively correlated. Moreover, the condition
(\ref{eq: entropy variation reversible, deg 2}) means that, in the
case $\delta S=0$, the distributions of $f_{\alpha}$ and $\delta f_{\alpha}$
are statistically uncorrelated. Note also that, due to (\ref{eq: entropy variation reversible, deg 1}),
the condition (\ref{eq: entropy variation reversible, deg 2}) is
equivalent to 
\begin{equation}
\overline{\delta(f^{2})}=\delta(\overline{f}^{2}).\label{eq: quadratic, reversible}
\end{equation}
In order to demonstrate that the variables $\vec{f}$ and $\delta\vec{f}$
are not independent in general, let us consider variations $\vec{\delta f}$
that are not proportional to $(1,\dots,1)$, which is the vector associated
with a shift of the ground energy in (\ref{eq: basic free energy formula}).
We say that two random variables $\mathbf{a}$ and $\mathbf{b}$ supported
on the same finite set $\{1,\dots,m\}$ and with joint probability
$w(\mathbf{a}=a_{\alpha},\mathbf{b}=b_{\beta})=w_{\alpha\beta}$ are
\textit{totally uncorrelated} if the expressions 
\begin{equation}
\sum_{\alpha=1}^{m}\sum_{\beta=1}^{m}w_{\alpha\beta}a_{\alpha}^{u}b_{\beta}^{v}-\left(\sum_{\alpha=1}^{m}\sum_{\beta=1}^{m}w_{\alpha\beta}a_{\alpha}^{u}\right)\cdot\left(\sum_{\alpha=1}^{m}\sum_{\beta=1}^{m}w_{\alpha\beta}b_{\beta}^{v}\right)\label{eq: moments}
\end{equation}
vanish for all natural numbers $u,v$. We focus on exponential sums
(\ref{eq: basic free energy formula}) that return a given expression
with the minimal number of exponential terms, so that all the functions
$f_{\alpha}$ are pairwise distinct. One can always reduce to this
case since, if $f_{\alpha}=f_{\beta}$, $\alpha\neq\beta$, the pair
$(f_{\alpha},f_{\beta})$ can be substituted by $f_{\alpha}+\ln2$
without affecting the sum (\ref{eq: basic free energy formula}). 

Considering the variation for such a system, we get the following 
\begin{prop}
\label{prop: totally uncorrelated}The vectors $\mathbf{f}$ and $\delta\mathbf{f}$,
seen as random variables with joint probability $w(f_{\alpha},\delta f_{\beta})=\delta_{\alpha\beta}w_{\alpha}$,
are totally uncorrelated if and only if $\delta\mathbf{f}$ is proportional
to $(1,\dots,1)$. 
\end{prop}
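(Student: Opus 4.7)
The plan is to exploit the fact that the joint distribution is concentrated on the diagonal $\{(f_\alpha,\delta f_\alpha)\}_{\alpha=1}^m$, so that the double sum in (\ref{eq: moments}) collapses to a single sum. Concretely, total uncorrelation amounts to
\begin{equation*}
\sum_{\alpha=1}^m w_\alpha f_\alpha^u (\delta f_\alpha)^v \;=\; \Bigl(\sum_{\alpha=1}^m w_\alpha f_\alpha^u\Bigr)\cdot\Bigl(\sum_{\alpha=1}^m w_\alpha (\delta f_\alpha)^v\Bigr) \qquad \text{for all } u,v\in\mathbb{N}.
\end{equation*}

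For the easy direction, if $\delta f_\alpha = c$ for a constant $c$ and every $\alpha$, both sides reduce to $c^v\,\overline{f^{u}}$, so the identity holds for every $u,v$.

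For the converse, I would restrict attention to $v=1$, which is already enough to force the conclusion. This gives, for every $u\ge 0$,
\begin{equation*}
\sum_{\alpha=1}^m w_\alpha\,(\delta f_\alpha - \overline{\delta f})\, f_\alpha^u \;=\; 0.
\end{equation*}
Setting $c_\alpha := w_\alpha(\delta f_\alpha - \overline{\delta f})$, this says that the vector $(c_1,\dots,c_m)$ lies in the left kernel of the Vandermonde-type matrix $\llbracket f_\alpha^u\rrbracket_{u=0,\dots,m-1;\;\alpha=1,\dots,m}$. Here I invoke the hypothesis emphasised just before the proposition, namely that the $f_\alpha$ are pairwise distinct (otherwise equal exponents may be merged as explained). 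Pairwise distinctness makes the Vandermonde matrix invertible, so $c_\alpha=0$ for every $\alpha$. Since the Gibbs weights $w_\alpha$ are strictly positive by (\ref{eq: Generalized Gibbs' weights}), one concludes $\delta f_\alpha=\overline{\delta f}$ for every $\alpha$, i.e.\ $\delta\mathbf{f}$ is proportional to $(1,\dots,1)$.

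The main obstacle is a conceptual one rather than a computational one: the condition (\ref{eq: moments}) a priori involves infinitely many pairs $(u,v)$, so one has to recognise that the case $v=1$ (together with $u=0,\dots,m-1$) suffices, and that the step from the countably many polynomial moment identities to pointwise equality $\delta f_\alpha=\overline{\delta f}$ is exactly what the distinctness assumption on the $f_\alpha$ provides via Vandermonde/Lagrange interpolation. Once this is noticed, the remaining verifications are routine.
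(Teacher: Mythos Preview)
Your proof is correct and follows essentially the same route as the paper's: restrict to $v=1$ and exploit the linear independence of the power vectors $(f_\alpha^u)_\alpha$ via a Vandermonde argument. Your packaging is in fact slightly cleaner---by including $u=0$ (which holds automatically since $\sum_\alpha w_\alpha(\delta f_\alpha-\overline{\delta f})=0$) you use the standard Vandermonde with exponents $0,\dots,m-1$ and thereby avoid both the paper's appeal to Proposition~3.1 of \cite{AK2016} for $\ker\mathbf{H}$ and its ``generic point'' caveat needed to ensure $\prod_\alpha f_\alpha\neq 0$.
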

\begin{proof}
Introducing the vectors 
\begin{equation}
\vec{f}_{(u)}:=(f_{1}^{u},\dots,f_{m}^{u}),\quad u\in\mathbb{N}\label{eq: powered exponents}
\end{equation}
we find that 
\begin{equation}
\overline{f^{u}\cdot\delta f}-\overline{(f)^{u}}\cdot\overline{\delta f}=\vec{f}_{(u)}^{T}\cdot\mathbf{H}\cdot\delta\vec{f}\label{eq: covariance with powers and bilinear}
\end{equation}
where the matrix $\mathbf{H}$ has entries 
\begin{equation}
(\mathbf{H})_{\alpha\beta}:=\delta_{\alpha\beta}w_{\alpha}-w_{\alpha}w_{\beta}.\label{eq: Hessian super-ideal skeleton}
\end{equation}
For a generic point $\boldsymbol{x}$, the Vandermonde determinant
\begin{equation}
\det(\vec{f}_{(1)},\vec{f}_{(2)},\dots,\vec{f}_{(m)})=|f_{\alpha}^{\beta-1}|_{\alpha,\beta=1}^{m}\cdot\prod_{\alpha=1}^{m}f_{\alpha}\label{eq: Vandermonde for dependence}
\end{equation}
is non-vanishing, since we are assuming that all the functions $f_{\alpha}$
are pairwise distinct. Thus, the vectors $\vec{f}_{(u)}$, $u=1,\dots,m$,
are linearly independent. If $\overline{f^{u}\cdot\delta f}-\overline{(f)^{u}}\cdot\overline{\delta f}$
vanishes, then (\ref{eq: covariance with powers and bilinear}) implies
that $\vec{f}_{(u)}$ lies in the orthogonal space to $\mathbf{H}\vec{\delta f}$.
Using the assumption that $\vec{\delta f}$ is not proportional to
$(1,\dots,1)$ and Proposition 3.1 in \cite{AK2016}, one easily
shows that this space has dimension $m-1$. But (\ref{eq: Vandermonde for dependence})
is non-vanishing, hence at least one of the correlations $\overline{f^{u}\cdot\delta f}-\overline{(f)^{u}}\cdot\overline{\delta f}$,
$u=1,\dots,m$, does not vanish. So $\vec{f}$ and $\delta\vec{f}$
are not totally uncorrelated. 

If instead $\vec{\delta f}$ is proportional to $(1,\dots,1)$, then
$\vec{\delta f}$ is the eigenvector of $\mathbf{H}$ corresponding
to the eigenvalue $0$, so (\ref{eq: covariance with powers and bilinear})
identically vanishes independently on $u$. The same holds for any
choice of $v$ in (\ref{eq: moments}). 
\end{proof}
We stress that one can write (\ref{eq: entropy variation and fluctuations})
as $\delta S=-\vec{f}^{T}\mathbf{H}\vec{\delta f}$. So the previous
results imply that the condition of stationary entropy $\delta S=0$
with respect to the variation $\vec{\delta f}$ can be refined by
higher-order conditions related to the correlations (\ref{eq: moments}),
and it is enough to satisfy $m$ such conditions to recover the only
trivial variation, that is the shift of the ground energy $\delta\vec{f}=c\cdot(1,\dots,1)$,
$c\in\mathbb{R}$. The role of the choice of the ground energy has
been highlighted in the algebraic study of the tropical limit in statistical
physics \cite{Angelelli2017}. On the other hand, in the cases where
not all the components of $f$ are pairwise equal, there exists a
component, say $f_{m}$, such that $f_{m}\neq\overline{f}$. Since
the condition (\ref{eq: entropy variation and fluctuations}) is linear
in the variations $\delta f$, we can solve (\ref{eq: entropy variation and fluctuations})
with respect to $\delta f_{m}$ and get 
\begin{equation}
\delta f_{m}=w_{m}^{-1}\left(f_{m}-\sum_{\alpha=1}^{m}w_{\alpha}f_{\alpha}\right)^{-1}\cdot\left[\left(\sum_{\alpha=1}^{m}w_{\alpha}f_{\alpha}\right)\cdot\left(\sum_{\beta=1}^{m-1}w_{\beta}\delta f_{\beta}\right)-\sum_{\alpha=1}^{m-1}w_{\alpha}f_{\alpha}\delta f_{\alpha}\right].\label{eq: variation for reversible, general case}
\end{equation}

Finally, the point out that condition $\delta S=0$ expressed by (\ref{eq: quadratic, reversible})
is equivalent to the equation 
\begin{equation}
\overline{f}\cdot\overline{\delta f}=\frac{1}{2}\cdot\overline{\delta(f{}^{2})}.\label{eq: virial-like equation}
\end{equation}
It is noted an analogy with the virial theorem \cite{LL1980}, since
both cases describe a balancing condition between two contributions.
In our case, the balancing condition (\ref{eq: virial-like equation})
defines an equilibrium situation, namely, a reversibility condition. 

\section{\label{sec: Principal curvatures and entropy} Principal curvatures
and the second law }

The Gauss-Kronecker curvature was recognised as a suitable quantity
to distinguish particular systems in \cite{AK2016}, since it identifies
some natural generalizations of ideal models in statistical physics.
This aspect can be related to the maximum entropy principle as follows. 

From the second fundamental form \cite{Eisenhart1909} 
\begin{equation}
\Omega_{ik}=\vec{N}\cdot\frac{\partial^{2}\vec{X}}{\partial x_{i}\partial x_{k}}=\frac{1}{\sqrt{\det\mathbf{g}}}\left(\overline{f}_{\{ik\}}-\overline{f}_{i}\cdot\overline{f}_{k}\right),\quad i,k=1,\dots,n\label{eq: second fundamental form}
\end{equation}
where \cite{AK2016} 
\begin{equation}
\overline{f}_{\{ik\}}:=\sum_{\alpha=1}^{m}w_{\alpha}\left(\frac{\partial^{2}f_{\alpha}}{\partial x_{i}\partial x_{k}}+\frac{\partial f_{\alpha}}{\partial x_{i}}\cdot\frac{\partial f_{\alpha}}{\partial x_{k}}\right),\quad i,k=1,\dots,n.\label{eq: second average}
\end{equation}
we get the Weingarten map (or shape operator)
\begin{align}
W_{\,j}^{i}= & \sum_{k=1}^{n}g^{ik}\Omega_{kj}\nonumber \\
= & \frac{1}{\sqrt{\det\mathbf{g}}}\left(\sum_{\alpha=1}^{m}w_{\alpha}\left(\frac{\partial^{2}f_{\alpha}}{\partial x_{i}\partial x_{j}}+\frac{\partial f_{\alpha}}{\partial x_{i}}\cdot\frac{\partial f_{\alpha}}{\partial x_{j}}\right)-\overline{f}_{i}\cdot\overline{f}_{j}\right)\nonumber \\
 & -\frac{\overline{f}_{i}}{(1+||\overline{f}||^{2})^{3/2}}\left(\sum_{k=1}^{n}\sum_{\alpha=1}^{m}w_{\alpha}\left(\overline{f}_{k}\frac{\partial^{2}f_{\alpha}}{\partial x_{j}\partial x_{k}}+\frac{\partial f_{\alpha}}{\partial x_{j}}\cdot\overline{f}_{k}\frac{\partial f_{\alpha}}{\partial x_{k}}\right)-\overline{f}_{j}\cdot(\overline{f}_{k})^{2}\right).\label{eq: Weingarten map, components}
\end{align}
The quantities $W_{\,j}^{i}$, $i,j\in\{1,\dots,n\}$, can be expressed
by 
\begin{equation}
\mathbf{W}=\det\mathbf{g}^{-3/2}\cdot\left((\det\mathbf{g})\cdot\idd-\nabla F\cdot\nabla F^{T}\right)\cdot\boldsymbol{\partial^{2}}F\label{eq: Weingarten map}
\end{equation}
where $\boldsymbol{\partial^{2}}F$ is the Hessian matrix of $F$
as defined in (\ref{eq: Hessian of F}). Thus, we can look at the
eigenvalues of $\mathbf{W}$, i.e. the principal curvatures. From
the matrix determinant lemma \cite{Gantmacher1977}, we get 
\begin{align}
\det\left((\det\mathbf{g}\cdot\idd)-\nabla F\cdot\nabla F^{T}\right)= & (\det\mathbf{g})^{n}\cdot\left(1-(\det\mathbf{g})^{-1}\cdot||\nabla F||^{2}\right)\nonumber \\
= & (\det\mathbf{g})^{n}\cdot\frac{\det\mathbf{g}-||\nabla F||^{2}}{\det\mathbf{g}}=(\det\mathbf{g})^{n-1}.\label{eq: determinant rank-1 perturbation}
\end{align}
From (\ref{eq: Weingarten map}) and (\ref{eq: determinant rank-1 perturbation})
one can see that $\det\mathbf{W}=0$ implies that the Hessian matrix
is singular, $\det(\boldsymbol{\partial^{2}}F)=0$. Thus the occurrence
of vanishing principal curvatures, which is equivalent to $\det(\mathbf{W})=0$,
is related to the non-invertibility of $\boldsymbol{\partial^{2}}F$. 

In the particular case of affine functions $f_{\alpha}(\boldsymbol{x})=b_{\alpha}+\sum_{i=1}^{n}a_{\alpha i}x_{i}$,
the terms $\frac{\partial^{2}f_{\alpha}}{\partial x_{i}\partial x_{k}}$
in (\ref{eq: second average}) vanish, so (\ref{eq: Weingarten map})
becomes 
\begin{align}
W_{\,j}^{i}= & \frac{1}{\sqrt{\det\mathbf{g}}}\left(\sum_{\alpha=1}^{m}w_{\alpha}\frac{\partial f_{\alpha}}{\partial x_{i}}\cdot\frac{\partial f_{\alpha}}{\partial x_{j}}-\overline{f}_{i}\cdot\overline{f}_{j}\right)\nonumber \\ 
& -\sum_{k=1}^{n}\frac{\overline{f}_{i}\overline{f}_{k}}{(1+||\overline{f}||^{2})^{3/2}}\left(\sum_{\alpha=1}^{N}w_{\alpha}\frac{\partial f_{\alpha}}{\partial x_{j}}\frac{\partial f_{\alpha}}{\partial x_{k}}-\overline{f}_{j}\cdot\overline{f}_{k}\right)\label{eq: shape operator, components affine case}
\end{align}
that is 
\begin{align}
\mathbf{W}= & \frac{1}{\det\mathbf{g}^{1/2}}\cdot\mathrm{cov}(\mathbf{\partial f},\mathbf{\partial f})-\frac{1}{\det\mathbf{g}^{3/2}}\cdot\nabla F\cdot\nabla F^{T}\cdot\mathrm{cov}(\mathbf{\partial f},\mathbf{\partial f})\nonumber \\
= & \det\mathbf{g}^{-3/2}\cdot\left((\det\mathbf{g}\cdot\idd)-\nabla F\cdot\nabla F^{T}\right)\cdot\mathrm{cov}(\mathbf{\partial f},\mathbf{\partial f})\label{eq: shape operator}
\end{align}
where we denote by $(\mathbf{\partial f})_{i}$ the random variable taking
values $\{\partial_{x_{i}}f_{1},\dots,\partial_{x_{i}}f_{m}\}$, $i\in\{1,\dots,n\}$, and we assume the joint probability $w\left(\partial_{x_{1}}f_{\alpha_{1}},\dots,\partial_{x_{n}}f_{\alpha_{n}}\right)=w_{\alpha_{1}}\cdot\delta_{\alpha_{1}\alpha_{2}}\cdot\dots\cdot\delta_{\alpha_{1}\alpha_{n}}$. So the Hessian matrix $\boldsymbol{\partial^{2}}F$ becomes a covariance
matrix $\mathrm{cov}(\mathbf{\partial f},\mathbf{\partial f})$ relative
to $\mathbf{\partial f}$. 

Therefore, the Hessian $\boldsymbol{\partial^{2}}F$ establishes a
connection between a geometric condition on the Gauss-Kronecker curvature
(more generally, on the principal curvatures of the statistical hypersurface)
and the second law of the thermodynamics, with special regard to the
stability of the statistical system. In fact, it has been noticed
that the convexity of the free energy and the concavity of the entropy
with respect to some proper variables are not only useful technical
requirements, but they also have fundamental physical implications
in relation to the second law of thermodynamics (see e.g. \cite{terHorst1987,Leff1996}
and references therein). In particular, the focus of \cite{terHorst1987}
is on internal energy. On the other hand, starting from (\ref{eq: basic free energy formula})
we can combine Gibbs' formulation with the study of convexity, as
it naturally emerges from the extrinsic characteristics of hypersurfaces
embedded in a metric space. Furthermore, this approach also draws
attention to systems which do \emph{not} satisfy standard convexity
assumptions, such as those with negative temperatures or metastable
states. Suitable generalisations of the basic statistical mapping
are required to deal with these phenomena \cite{Angelelli2017,AK2018}. 

In this way, the evolution of the principal curvatures of the statistical
hypersurface is linked to the tendency to stability, or a departure
from it. The occurrence of nonlinearities described by terms $\frac{\partial^{2}f_{\alpha}}{\partial x_{i}\partial x_{k}}$
in (\ref{eq: Weingarten map, components}) may result in particular
physical behaviours, which are typical of non-homogeneous systems.
For instance, these contributions may follow from the particular dependence
$g=g(\varepsilon)$ of the degeneration $g$ on the associated energy
level $\varepsilon$. Concrete examples in this regard involve large
systems with long-range interactions \cite{Gross2006,Bouchet2010},
or small systems with bounded spectrum \cite{Ramsey1956} and boundary
effects \cite{Buonsante2016}. Some of these features have been discussed
in a tropical setting \cite{Angelelli2017}, and we postpone a detailed
analysis of their geometric characteristics to a separate work.

\section{Effects of deformations on statistical weights}
\label{sec: Effects of deformations on statistical weights}

It is often assumed that the set of energy configurations and the
associated degenerations are known, so one recovers the equilibrium
distribution from the maximum entropy condition \cite{LL1980}, where
the weights $w_{\alpha}$ are functionals of $\vec{f}$. In our approach
we leave the functions $\vec{f}$ free in order to consider their
deformations $\delta\vec{f}$. However, we can still recover the standard
form for the free energy starting from the dependence of some ``generalised
weights'' $h_{\alpha}$ on the variables $f_{\beta}$, $\alpha,\beta\in\{1,\dots,m\}$. 

At this purpose, we assume the relation (\ref{eq: variation Gibbs' weights})
as a starting point to explore the effects of deformations $\delta f_{\alpha}$
on the weights and their relation with the stationary entropy condition. 

\subsection{\label{subsec: Differential characterization of the fundamental statistical mapping }
Differential characterization of the fundamental statistical mapping }

We look at functions $h_{\alpha}$, $\alpha=1,\dots,m$, which depend
on $\{f_{1},\dots,f_{m}\}$ through the following relation based on
(\ref{eq: variation Gibbs' weights})
\begin{equation}
{\displaystyle {\displaystyle \frac{\partial h_{\alpha}}{\partial f_{\beta}}=\delta_{\alpha\beta}h_{\beta}-h_{\alpha}h_{\beta}}},\quad\alpha,\beta\in\{1,\dots,m\}.\label{eq: ansatz for metric, 2}
\end{equation}
The requirement (\ref{eq: ansatz for metric, 2}) implies that $\sum_{\alpha=1}^{m}h_{\alpha}df_{\alpha}$
is closed. Hence it is exact on contractible domains, so there exists
a potential $\tilde{F}$ such that 
\begin{equation}
h_{\alpha}={\displaystyle \frac{\partial\tilde{F}}{\partial f_{\alpha}}}.\label{eq: potential for weights}
\end{equation}
Solving the equations (\ref{eq: ansatz for metric, 2}), one gets 
\begin{prop}
\label{prop: fundamental statistical map from balancing assumption}
Equation (\ref{eq: ansatz for metric, 2}) implies the existence of
a potential function $\tilde{F}$ such that $h_{i}=\partial_{f_{i}}\tilde{F}$,
where 
\begin{equation}
{\displaystyle \tilde{F}(\boldsymbol{x})=\log\left(\gamma+\sum_{\alpha=1}^{m}e^{f_{\alpha}(\boldsymbol{x})-\sigma_{\alpha}}\right)}.\label{eq: fundamental statistical map from balancing assumption}
\end{equation}
\end{prop}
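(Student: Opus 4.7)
The plan is to turn the differential relation (\ref{eq: ansatz for metric, 2}) into a system of partial differential equations for the potential $\tilde{F}$ whose structure is easy to integrate.

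First I would check that the $1$-form $\omega := \sum_\alpha h_\alpha \, df_\alpha$ is closed: by (\ref{eq: ansatz for metric, 2}) one has $\partial_{f_\beta} h_\alpha = \delta_{\alpha\beta} h_\beta - h_\alpha h_\beta$, and this expression is manifestly symmetric in $\alpha,\beta$. On a contractible domain in $f$-space $\omega$ is therefore exact, yielding the potential $\tilde{F}$ with $h_\alpha = \partial\tilde{F}/\partial f_\alpha$. This recovers the claim already flagged around (\ref{eq: potential for weights}).

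Next I would pass to the exponential of the potential, which is the natural move given that the $h_\alpha$ are meant to play the role of normalised weights. Set $\Phi := e^{\tilde{F}}$, so that $\partial_{f_\alpha}\Phi = h_\alpha\Phi$. Differentiating once more and substituting (\ref{eq: ansatz for metric, 2}) gives
\begin{equation*}
\frac{\partial^2\Phi}{\partial f_\alpha \partial f_\beta}
 = \bigl(\delta_{\alpha\beta}h_\beta - h_\alpha h_\beta\bigr)\Phi + h_\alpha h_\beta \Phi
 = \delta_{\alpha\beta}\,\frac{\partial\Phi}{\partial f_\beta}.
\end{equation*}
Thus all the mixed partials of $\Phi$ vanish for $\alpha\neq\beta$, while along the diagonal one has $\partial^2_{f_\alpha}\Phi = \partial_{f_\alpha}\Phi$. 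The vanishing of mixed partials forces $\Phi$ to split additively as $\Phi(f_1,\dots,f_m) = \gamma + \sum_{\alpha=1}^m \Psi_\alpha(f_\alpha)$ for some constant $\gamma$ and one-variable functions $\Psi_\alpha$.

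Finally I would solve the diagonal equations $\Psi_\alpha''(f_\alpha) = \Psi_\alpha'(f_\alpha)$, which integrate to $\Psi_\alpha(f_\alpha) = A_\alpha e^{f_\alpha} + B_\alpha$ with constants $A_\alpha, B_\alpha$; absorbing the additive constants $B_\alpha$ into $\gamma$ and writing $A_\alpha = e^{-\sigma_\alpha}$ (for $A_\alpha > 0$, which is the physically relevant case consistent with positivity of $h_\alpha$), one obtains $\Phi = \gamma + \sum_\alpha e^{f_\alpha - \sigma_\alpha}$ and therefore the desired formula (\ref{eq: fundamental statistical map from balancing assumption}). The main conceptual step is the substitution $\Phi = e^{\tilde{F}}$, which converts the quadratic non-linearity on the right-hand side of (\ref{eq: ansatz for metric, 2}) into the linear, separable identity $\Phi_{\alpha\beta} = \delta_{\alpha\beta}\Phi_\beta$; everything else is routine integration, with the choice of $\gamma$ and of the signs of $A_\alpha$ accounting for the ambiguity in the potential $\tilde{F}$.
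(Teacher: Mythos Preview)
Your proof is correct and takes a genuinely different route from the paper's. The paper works with the logarithms $\log h_\alpha$ and their pairwise differences $l_{\alpha\beta}=\log h_\alpha-\log h_\beta$, shows that each $l_{\alpha\beta}$ depends only on $f_\alpha,f_\beta$ and is in fact $f_\alpha-f_\beta+c_{\alpha\beta}$ with constants satisfying the cocycle relation $c_{\alpha\beta}+c_{\beta\gamma}+c_{\gamma\alpha}=0$, and thereby extracts a common factor $H$ with $h_\alpha=H\,e^{f_\alpha-\sigma_\alpha}$; it then integrates the resulting equation $\partial_{f_\beta}H^{-1}=e^{f_\beta-\sigma_\beta}$ one variable at a time. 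Your approach bypasses this ratio/cocycle analysis entirely: the substitution $\Phi=e^{\tilde F}$ converts the quadratic nonlinearity of (\ref{eq: ansatz for metric, 2}) into the linear, separable system $\partial_{f_\alpha}\partial_{f_\beta}\Phi=\delta_{\alpha\beta}\,\partial_{f_\beta}\Phi$, after which the additive splitting of $\Phi$ and the one-variable ODEs $\Psi_\alpha''=\Psi_\alpha'$ are immediate. Your argument is shorter and makes the linearising role of the exponential transform transparent; the paper's argument, while more laborious, has the minor advantage that the shifts $\sigma_\alpha$ emerge directly as real constants from the cocycle, so no sign assumption on the integration constants $A_\alpha$ is needed at the end (though it tacitly assumes $h_\alpha>0$ in order to take logarithms, which is effectively the same restriction you flag).
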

\begin{proof}
See Appendix \ref{sec: Appendix A}. 
\end{proof}
The coefficient $\gamma$ in (\ref{eq: fundamental statistical map from balancing assumption})
is an index for the loss of translational invariance of energies by
the same quantity, and it is related to the possibility
to get a normalized probability distribution. Indeed, this aspect has been discussed through the notion of local tropical symmetry in \cite{Angelelli2017}, where tropical copies introduce a structural term that is described by $\gamma$ in the present framework. The possibility to recover a ``standard'' probability distribution $\sum_{\alpha=1}^{m}h_{\alpha}=1$ entails the condition $\gamma=0$ in (\ref{eq: fundamental statistical map from balancing assumption}). So the term $\gamma$ describes
a structural parameter that is not allowed to vary, namely, it is
not involved in the system (\ref{eq: ansatz for metric, 2}) of differential
equations. This distinction between the structural parameter $\gamma$
and varying quantities $\vec{f}$ is also reflected in those variations
induced by independent variables $\boldsymbol{x}$. 

It is worth commenting briefly on a graph-theoretic interpretation
of the assumption (\ref{eq: ansatz for metric, 2}). Let us consider
a complete graph with vertices $\mathcal{V}=\{1,\dots,m\}$ including
loops, that is edges of the form $(\alpha,\alpha)$. From the assignment
of weights $w_{(\alpha\beta)}$ for edges, $\alpha,\beta=1,\dots,m$,
we can consider the degree matrix $\mathbf{D}:=\mathrm{diag}(D_{1},\dots,D_{m})$,
where 
\begin{equation}
D_{\alpha}:=\sum_{\beta=1}^{m}w_{(\alpha\beta)},\label{eq: degree matrix for graph}
\end{equation}
the adjacency matrix $\mathbf{W}:=\left\llbracket w_{(\alpha\beta)}\right\rrbracket _{\alpha,\beta=1,\dots,m}$
and, hence, the Laplacian 
\begin{equation}
\mathbf{L}:=\mathbf{D}-\mathbf{W}\label{eq: Laplacian matrix for graph}
\end{equation}
whose components of $\boldsymbol{L}$ are 
\begin{equation}
L_{\alpha\beta}:=\begin{cases}
-w_{(\alpha\beta)} & \alpha\neq\beta,\\
D_{\alpha}-w_{(\alpha\alpha)} & \alpha=\beta.
\end{cases}\label{eq: constrained Laplacian matrix for graph}
\end{equation}
For any assignment of weights $w_{\alpha}$ to each \emph{node} $\alpha$,
the consistency constraint between the weights for nodes and for edges
imposes that 
\begin{equation}
D_{\alpha}=\sum_{\beta=1}^{m}w_{(\alpha\beta)}=w_{\alpha},\quad\alpha=1,\dots,m.\label{eq: balancing constraints for each vertex}
\end{equation}
If $\sum_{\alpha=1}^{m}w_{\alpha}=1$, one can look at $w_{(\alpha\beta)}$
as the joint probability of random variables over $\{1,\dots,m\}\times\{1,\dots,m\}$
and the condition (\ref{eq: balancing constraints for each vertex})
as the definition of marginal probabilities. Then, the requirement
(\ref{eq: balancing constraints for each vertex}) is a balancing
condition, or local conservation law, between the node and edge weights.
Specifically, this means that the M\"{o}bius combination 
\begin{equation}
w_{\alpha}-\sum_{\beta=1}^{m}w_{(\alpha\beta)}\label{eq: Mobius combination}
\end{equation}
vanishes for each individual $\alpha=1,\dots,m$. If we choose the
joint distribution for independent variables $w_{(\alpha\beta)}=w_{\alpha}w_{\beta}$
in (\ref{eq: constrained Laplacian matrix for graph}), we recover
the right-hand side of (\ref{eq: ansatz for metric, 2}). Thus, the
evolution of functions $h_{\alpha}$ with respect to varying quantities
$f_{\beta}$, $\alpha,\beta\in\{1,\dots,m\}$, is described by the
Laplacian of the associated weighted graph (\ref{eq: constrained Laplacian matrix for graph}). 

\subsection{\label{subsec: Comments on replicator dynamics } Comments on replicator
dynamics }

The variation of entropy (\ref{eq: entropy variation and fluctuations})
can now be discussed in terms of its effects on Gibbs' distribution.
At this purpose, we start from the assumption that the function $f_{\alpha}$
are bounded in each neighborhood of a point $\boldsymbol{x}.$ If
this condition does not hold, the associated singularities are interpreted
as phase transitions \cite{AK2016}. Thus, for all $\boldsymbol{y}$
in such a neighborhood of $\boldsymbol{x}$ the expressions for Gibbs'
weights are preserved under the translation $f_{\alpha}\mapsto f_{\alpha}+M$,
where $M$ can be chosen in order to satisfy 
\begin{equation}
M>-{\displaystyle \min_{\alpha}}\left\{ f_{\alpha}(\boldsymbol{y})\right\} .\label{eq: shift ground energy}
\end{equation}
This invariance under the shift $f_{\alpha}\mapsto f_{\alpha}+M$
becomes non-trivial when different tropical algebraic structures
are taken into account. In order to compare different choices for the ground energy, the action of the shift on $f_{\alpha}$ or $\delta f_{\alpha}$ can be explored in terms of local tropical symmetry \cite{Angelelli2017}. 

Then, a new probability distribution $w^{(1)}$ can be introduced
\begin{equation}
w_{\alpha}^{(1)}(\boldsymbol{x}):=\frac{e^{f_{\alpha}}\cdot f_{\alpha}}{\sum_{\beta=1}^{m}e^{f_{\beta}}f_{\beta}},\quad\alpha\in\{1,\dots,m\}\label{eq: discrete replicator dynamics, f, first step}
\end{equation}
which is the first element of the sequence 
\begin{equation}
w_{\alpha}^{(t+1)}=w_{\alpha}^{(t)}\cdot\frac{f_{\alpha}^{(t)}}{\sum_{\beta=1}^{m}f_{\beta}^{(t)}w_{\beta}^{(t)}}.\label{eq: discrete replicator dynamics}
\end{equation}
The evolution of this sequence of probability distributions corresponds
to a kind of discrete replicator dynamics \cite{Bomze1983,Diederich1989}.
In fact, we can express (\ref{eq: discrete replicator dynamics})
as 
\begin{equation}
w_{\alpha}^{(t+1)}-w_{\alpha}^{(t)}=w_{\alpha}^{(t)}\cdot\left(\frac{f_{\alpha}^{(t)}}{\sum_{\beta=1}^{m}f_{\beta}^{(t)}w_{\beta}^{(t)}}-\sum_{\gamma=1}^{m}\frac{f_{\gamma}^{(t)}}{\sum_{\beta=1}^{m}f_{\beta}^{(t)}w_{\beta}^{(t)}}w_{\gamma}^{(t)}\right),\label{eq: discrete replicator dynamics, with fitness}
\end{equation}
where the fitness functions 
\begin{equation}
\frac{f_{\alpha}^{(t)}}{\sum_{\beta=1}^{m}f_{\beta}^{(t)}w_{\beta}^{(t)}}\label{eq: fitness functions}
\end{equation}
evolve too. This evolution naturally follows from the relation
(\ref{eq: variation Gibbs' weights}) linking Gibbs' weights to their
variations. 

Using the expression (\ref{eq: entropy variation and fluctuations})
for $\delta S$, we find that the stationary entropy condition $\delta S=0$
with respect to a variation $\delta\vec{f}$ can be equivalently expressed
by 
\begin{equation}
\left(\sum_{\alpha=1}^{m}e^{f_{\alpha}}f_{\alpha}\delta f_{\alpha}\right)\cdot\left(\sum_{\beta=1}^{m}e^{f_{\beta}}\right)=\left(\sum_{\alpha=1}^{m}e^{f_{\alpha}}f_{\alpha}\right)\cdot\left(\sum_{\beta=1}^{m}e^{f_{\beta}}\delta f_{\beta}\right).\label{eq: premise same expectation different distributions}
\end{equation}
The previous formula implies the equivalence of the expectation values
of $\delta f_{\alpha}$ with respect to the two distributions $\{w_{\alpha}(\boldsymbol{x})\}$
and $\{w_{\alpha}^{(1)}(\boldsymbol{x})\}$, namely 
\begin{equation}
\left\langle \delta f\right\rangle _{w_{1}}:=\sum_{\alpha=1}^{m}\frac{e^{f_{\alpha}}f_{\alpha}}{\sum_{\beta=1}^{m}e^{f_{\beta}}f_{\beta}}\delta f_{\alpha}=\sum_{\alpha=1}^{m}\frac{e^{f_{\alpha}}}{\sum_{\beta=1}^{m}e^{f_{\beta}}}\delta f_{\alpha}=:\left\langle \delta f\right\rangle _{w}.\label{eq: same expectation values}
\end{equation}
Similarly, from (\ref{eq: premise same expectation different distributions})
one can obtain a different set of fitness functions from the variations
$\delta f_{\alpha}$, that is 
\begin{equation}
\hat{w}_{\alpha}^{(1)}(\boldsymbol{x}):=\frac{e^{f_{\alpha}}\cdot\delta f_{\alpha}}{\sum_{\beta=1}^{m}e^{f_{\beta}}\delta f_{\beta}},\quad\alpha\in\{1,\dots,m\}\label{eq: discrete replicator dynamics, df, first step}
\end{equation}
and the corresponding formulation for the stationary entropy condition 
\begin{equation}
\langle f \rangle_{\hat{w}_{\alpha}^{1}}:=\sum_{\alpha=1}^{m}\hat{w}_{\alpha}^{(1)} \cdot f_{\alpha} =\sum_{\alpha=1}^{m} w_{\alpha}\cdot f_{\alpha}=:\langle f \rangle_{w_{\alpha}}.\label{eq: same expectation values, variations}
\end{equation}
In both these cases, we find that the entropy is stationary if and
only if the expectation value of $\delta f$ (respectively, $f$)
is the same for both the original Gibbs' distribution, and its evolution
(\ref{eq: discrete replicator dynamics, f, first step}) (respectively,
(\ref{eq: discrete replicator dynamics, df, first step})). 

Expectation values like those appearing in (\ref{eq: same expectation values, variations}) represent macroscopic observables. Different choices for the shift (\ref{eq: shift ground energy}) affect the constraints on them, which provide the fundamental information in the classical derivation of Gibbs' distribution through the maximum entropy principle. 
In this framework, the effects of this shift are manifest under the different actions on the sets $\{f_{\alpha}\}$ and $\{\delta f_{\alpha}\}$. Indeed, a shift for variations $\delta f_{\alpha}\mapsto \delta f_{\alpha}+M'$ preserves the equality between expectation values in (\ref{eq: same expectation values}), while a different choice $M'$ for $f_{\alpha}$ satisfying (\ref{eq: shift ground energy}) preserves $\left\langle \delta f\right\rangle _{w}$, but not necessarily $\left\langle \delta f\right\rangle _{w_{1}}$. This may result in additional requirements on the class of deformations $\delta f_{\alpha}$, together with the condition $\delta S=0$.  

\section{\label{sec: Ideal cases} Ideal and super-ideal cases}

General deformations can be divided into two classes. For the first
class, functions $f_{\alpha}$ remain unchanged and the deformations
$\delta f$ are due to the variations of the variables $x_{\alpha}\mapsto x_{\alpha}+\delta x_{\alpha}$.
For macroscopic systems, this is the change of the state of the system
due to the change of parameters which characterize it. For the second
class, the functions $f_{\alpha}$ are changed. This describes the
transformation of one statistical hypersurfaces to another one, or the transition
from one macroscopic system to another one (close to the original).
We will consider below both classes of deformations. 

The ideal case where $f_{\alpha}=b_{\alpha}+\sum_{i=1}^{n}a_{\alpha i}x_{i}$,
i.e. $\vec{f}=\vec{b}+\mathbf{A}\boldsymbol{x}$, is a special situation
where many geometric characteristics can be explicitly expressed and
related to physically relevant quantities. The first basic transformation
that one can consider is the translation $x_{i}\mapsto x_{i}+v_{i}\tau$.
Thus one gets 
\begin{align}
\delta f_{\alpha}= & \tau\cdot\sum_{i=1}^{n}a_{\alpha i}v_{i},\label{eq: ideal case, shift, variation functions}\\
\delta w_{\alpha}= & \tau w_{\alpha}\cdot\left(\sum_{i=1}^{n}a_{\alpha i}v_{i}-\sum_{\beta=1}^{m}\sum_{i=1}^{n}w_{\beta}a_{\beta i}v_{i}\right),\label{eq: ideal case, shift, variation weights}\\
\delta S= & \tau\cdot\left(\langle\vec{w}|\vec{b}\rangle+\vec{w}^{T}\mathbf{A}\boldsymbol{x}\right)\cdot(\vec{w}^{T}\mathbf{A}\boldsymbol{v})-\tau\cdot\left(\sum_{\alpha=1}^{m}w_{\alpha}b_{\alpha}\cdot(\mathbf{A}\boldsymbol{v})_{\alpha}+\sum_{\alpha=1}^{m}w_{\alpha}\cdot(\mathbf{A}\boldsymbol{x})_{\alpha}\cdot(\mathbf{A}\boldsymbol{v})_{\alpha}\right).\label{eq: ideal case, shift, variation entropy}
\end{align}
Having fixed $\tau$ and $\boldsymbol{v}$, the sign of $\delta S$
depends on the norm of $\boldsymbol{x}$. Indeed, fixing the unit
vector parallel to $\boldsymbol{x}$, i.e. $\vec{z}:=||\boldsymbol{x}||^{-1}\cdot\boldsymbol{x}$,
we see that 
\[
||\vec{x}||\cdot\left((\vec{w}^{T}\mathbf{A}\boldsymbol{z})\cdot(\vec{w}^{T}\mathbf{A}\boldsymbol{v})-\sum_{\alpha=1}^{m}w_{\alpha}\cdot(\mathbf{A}\boldsymbol{z})_{\alpha}\cdot(\mathbf{A}\boldsymbol{v})_{\alpha}\right)\geq-\langle\vec{w}|\vec{b}\rangle\cdot(\vec{w}^{T}\mathbf{A}\boldsymbol{v})+\sum_{\alpha=1}^{m}w_{\alpha}b_{\alpha}(\mathbf{A}\boldsymbol{v})_{\alpha}
\]
is satisfied on a half-line in $\mathbb{R}$. 

From the results in \cite{AK2016}, an ideal hypersurface has vanishing
Gauss-Kronecker curvature if and only if we can act on a certain nonvanishing vector in $\mathbb{R}^{n}$ through $\mathbf{A}$ to get a vector in $\mathbb{R}^{m}$ with all equal components, namely, $\vec{1}_{m}\in\mathrm{Im}(\mathbf{A})$
or $\vec{0}_{m}\in\mathrm{Im}(\mathbf{A})$. This property relates
to variations of ideal models preserving the statistical weights.
In fact the existence of such a vector $\boldsymbol{v}$ satisfying
\begin{equation}
\sum_{i=1}^{n}a_{\alpha i}\cdot v_{i}=\sum_{\beta=1}^{m}\sum_{i=1}^{n}w_{\beta}a_{\beta i}v_{i},\quad\alpha=1,\dots,m\label{eq: vanishing GK curvature ideal case}
\end{equation}
implies that the associated variation (\ref{eq: ideal case, shift, variation weights})
preserves all Gibbs' weights. Using the notation (\ref{eq: Hessian super-ideal skeleton}),
this means that the kernel of the matrix $\mathbf{H}\cdot\mathbf{A}$
is non-trivial. This happens when $\mathbf{A}$ has a non-trivial
kernel ($\vec{0}_{m}\in\mathrm{Im}(\mathbf{A})$), or when $\mathbf{A}(\mathbb{R}^{n})\cap\text{ker}(\mathbf{H})$
is non-trivial. As remarked in \cite{AK2016} (Proposition 3.1),
the vector $\vec{1}_{m}$ is the only eigenvector of $\mathbf{H}$
associated with the eigenvalue $0$, thus a direction $\boldsymbol{v}$ associated
to a variation of linear functions preserves all Gibbs' weights and,
hence, the entropy if and only if all the components of $\mathbf{A}\boldsymbol{v}$
are the same. This characterize a special class of isentropic variations
of the independent coordinates in ideals models, which corresponds to
a maximal permutational symmetry for the coordinates of the shift
vector $\mathbf{A}\boldsymbol{v}$. 

Among ideal models, the homogeneous cases ($b_{\alpha}=0$ for all
$\alpha$) are of particular physical interest. We discuss them in
the following. 

\subsection{\label{subsec: Linear cases} Linear cases }

The entropy (\ref{eq: entropy geometric formula}) has two contributions:
the first one $\sqrt{\det\mathbf{g}}\vec{X}\cdot\vec{N}$ has a pure
geometrical meaning, since it is the scalar product between $\vec{X}$
and the hypersurface vector field $\sqrt{\det\mathbf{g}}\cdot\vec{N}$
with a given orientation. The second contribution $\sum_{\alpha=1}^{m}w_{\alpha}\left(\sum_{i=1}^{n}x_{i}\frac{\partial f_{\alpha}}{\partial x_{i}}-f_{\alpha}\right)$
takes into account the deviation from the homogeneous \textit{linear
model} of degree $1$. Thus, it vanishes if all the functions $f_{\alpha}$
are linear, $f_{\alpha}(\boldsymbol{x})=\sum_{\alpha}a_{\alpha i}x_{i}$,
which gives the formula (\ref{eq: entropy geometric formula, super-ideal}). 

The distinction between these two contributions is physically relevant
since Gibbs' weights are often defined in terms of a scalar product
between a vector of control variables (temperature, chemical potentials)
and one of associated extensive quantities (energy levels, number
of molecules of a given species in the grand canonical partition function).
In the ideal case, the deviation from homogeneity is described by
the inclusion of constant terms $b_{\alpha}$ in the functions $f_{\alpha}$.
This results in a contribution $-\sum_{\alpha=1}^{m}w_{\alpha}b_{\alpha}$
in (\ref{eq: entropy geometric formula}). Depending on the physical
interpretation of the variables $\boldsymbol{x}$, this contribution
can be linked to a form of residual entropy, which is typical of highly
degenerate and frustrated systems \cite{Pauling1935,Lieb1967,Diep2005}. 

This geometric formulation allows one to go beyond the local study
of statistical characteristics. The global aspects have particular
significance in the framework of linear models, as can be seen looking
at the integral 
\begin{equation}
\int_{\Sigma}S(\boldsymbol{x})d\boldsymbol{x}\label{eq: Bayesian, induced prior}
\end{equation}
where $\Sigma$ is a compact subset of the statistical hypersurface
$V_{n}$. The focus on this quantity is suggested by the formula
(\ref{eq: entropy geometric formula}) for the entropy, since it corresponds
to the integrating function over a (hyper-)surface. In more detail,
we can consider two statistical hypersurfaces $V_{n,1}$ and
$V_{n,2}$ associated with two distinct sets of linear functions
$\{f_{1},\dots,f_{m}\}$ and $\{g_{1},\dots,g_{l}\}$. In order to
select a compact subset of $\mathbb{R}^{n+1}$, we fix any cone $\mathcal{C}\subseteq\mathbb{R}^{n+1}$
intersecting the two hypersurfaces and enclosing a compact region
$\Omega\subseteq\mathbb{R}^{n+1}$. An example of this construction
is presented in Fig. \ref{fig: section comparison linear hypersurfaces}.
\begin{figure}[tph]
\begin{centering}
\includegraphics[scale=0.5]{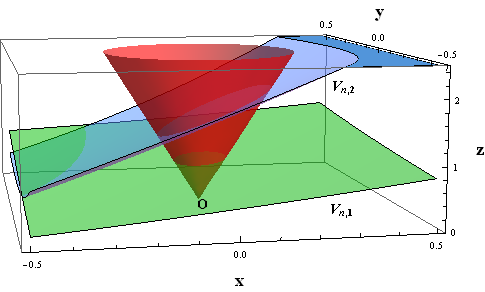}
\par\end{centering}
\centering{}\caption{\label{fig: section comparison linear hypersurfaces}A representation
of the comparison of two linear statistical hypersurfaces (in green
and in blue) with an intersecting cone (in red).}
\end{figure}
The vector $\vec{X}$ is orthogonal to the normal vector $\vec{N}$
for the cone, so 
\begin{equation}
\int_{\kappa}h\cdot(\vec{N}\cdot\vec{X})d\boldsymbol{x}=0\label{eq: vanishing contribution cone}
\end{equation}
for any compact subset $\kappa\subseteq\mathcal{C}$ and integrable
function $h$ with support $\kappa$. We can denote the boundaries $\Sigma_{1}:=V_{n,1}\cap\Omega$,
$\Sigma_{2}:=V_{n,2}\cap\Omega$, and $\Sigma_{0}:=\mathcal{C}\cap\Omega$
to get a region $\Omega\subseteq\mathbb{R}^{n+1}$ with piecewise
smooth boundary. 
The inclusion of a vanishing contribution (\ref{eq: vanishing contribution cone}) and a coherent choice for the orientations of the two hypersurfaces $V_{n,1}$ and $V_{n,2}$ allow us to invoke the Divergence theorem in $\mathbb{R}^{n+1}$ and get 
\begin{align}
\int_{\Sigma_{2}}S(\boldsymbol{x})d\boldsymbol{x}-\int_{\Sigma_{1}}S(\boldsymbol{x})d\boldsymbol{x} = & \int_{\Sigma_{2}}S(\boldsymbol{x})d\boldsymbol{x}+\int_{-\Sigma_{1}}S(\boldsymbol{x})d\boldsymbol{x} \nonumber \\
= & \int_{\Sigma_{2}}S(\boldsymbol{x})d\boldsymbol{x}+\int_{-\Sigma_{1}}S(\boldsymbol{x})d\boldsymbol{x}+\int_{\kappa}\sqrt{\det\mathbf{g}}\vec{X}\cdot\vec{N}d\boldsymbol{x} \nonumber \\
= & \int_{\Omega}\nabla\cdot\vec{X}d\boldsymbol{x}^{n+1} \nonumber \\
= & \int_{\Omega}\sum_{\alpha=1}^{n+1}\frac{\partial x_{\alpha}}{\partial x_{\alpha}}d\boldsymbol{x}^{n+1}\nonumber \\
= & \int_{\Omega}(n+1)\cdot d\boldsymbol{x}^{n+1} \nonumber \\
= & (n+1)\cdot\mathrm{volume}(\Omega).
\end{align}
This means, once a suitable cone is given, such a compact region and its volume are well-defined, and we can associate the variation of entropy between two linear models to the volume enclosed by them. 

It is worth stressing the physical meaning of this geometric construction,
with reference to the two geometric objects introduced in this section,
namely, the cone $\mathcal{C}$ and the integral characteristic (\ref{eq: Bayesian, induced prior}).
The cone $\mathcal{C}$ naturally arises when subsets of variables
$\boldsymbol{x}$ connected by a scaling factor $k$ are considered.
In more detail, in linear models the scaling $\boldsymbol{x}\mapsto(k_{B})^{-1}\cdot\boldsymbol{x}$
is related to the behaviour of the statistical hypersurface at large
values of the variables, the scaling is linked to a formal variation
for Boltzmann's constant $k_{B}$, and the limit $k_{B}\rightarrow0$
is associated with the \textit{tropical limit} (of the first kind)
of the statistical hypersurfaces \cite{AK2016} and the associated
statistical amoebas \cite{AK2018}. 

For what concerns the integration step, we first note that the canonical
volume form for this Riemannian manifold has been used. In turn, the
metric is inherited by the ambient space $\mathbb{R}^{n+1}$. Thus,
in this framework the natural emergence of an induced metric, its
relation with the entropy production, and the conditions needed to
apply Stokes' theorem, are all consequences of a single principle,
that is the embedding in an Euclidean space defined by the free energy
(\ref{eq: basic free energy formula}). 

On the other hand, different measures over $V_{n}$ may be considered. We have implicitly dealt with this aspect in the choice of an arbitrary compact subset $\Sigma$ of the statistical
hypersurface $\Sigma$. In general, the introduction of a measure
over the variable space associated with a probability distribution
(Gibbs' weights in our model) has a Bayesian interpretation \cite{Habeck2014}.
Indeed, the parameters $\boldsymbol{x}$ defining Gibbs' distribution
$\vec{w}$ are not fixed, and the space of parameters $\mathbb{R}^{n}$
becomes a measure space with a choice of the measure $d\nu$ on it.
This can be interpreted as the occurrence of parameters $\boldsymbol{x}$
with some weight expressed by the density $d\nu$, which is a typical
approach in Bayesian probability. The integral (\ref{eq: Bayesian, induced prior})
can then be seen as an ``average value'' for the entropy (\ref{eq: entropy geometric formula, super-ideal}) over $\Sigma$, with weight induced by the uniform distribution over $\mathbb{R}^{n+1}$, or as the average value of the scalar product $\vec{X}\cdot\vec{N}$ with measure $\sqrt{\det\mathbf{g}}$.
It is remarkable that a similar measure has been considered in information
theory, i.e. Jeffreys prior \cite{Jeffreys1946}, which is however
associated with a Riemannian metric defined from the entropy, namely,
the Fisher metric \cite{Amari2000}. 

The introduction of different measures in (\ref{eq: Bayesian, induced prior})
and their physical interpretation will be studied in a separate work. 

\subsection{\label{subsec: Super ideal cases} Super-ideal cases}

The super-ideal case where $m=n$ and $f_{\alpha}=x_{\alpha}$, $\alpha=1,\dots,n$,
is a fundamental one, since it provides the basic structure for our
model. On the other hand, it is also a special case, since a variation
of the variables implies a variation of the functions, so the two
kinds of deformations mentioned above coincide. We start from (\ref{eq: generalized entropy, geometric})
and consider variations of the type $x_{\alpha}\mapsto x_{\alpha}+v_{\alpha}\tau$
with $\tau\ll1$. In this way, the formula for the variation of entropy
is 
\begin{equation}
\delta S=\tau\cdot\left(\sum_{\alpha=1}^{n}w_{\alpha}x_{\alpha}\right)\cdot\left(\sum_{\beta=1}^{n}w_{\beta}v_{\beta}\right)-\tau\cdot\sum_{\alpha=1}^{n}w_{\alpha}x_{\alpha}v_{\alpha}.\label{eq: variation entropy super-ideal case}
\end{equation}
The choice $v_{\alpha}=v_{\beta}$ for all $\beta=1,\dots,n$ coincides
with the direction $(1,1,\dots,1)$ and satisfies the detailed balance
condition $v_{\alpha}=\sum_{\beta=1}^{n}w_{\beta}v_{\beta}$. In such
a case, one trivially has $\delta S=0$ since each term $\delta w_{\alpha}$
individually vanishes. 

The physical requirement on the entropy growth implies a natural direction
for the deformation parameter $\tau$, namely, its sign coincides
with the sign for $\sum_{\alpha=1}^{n}w_{\alpha}x_{\alpha}v_{\alpha}-\left(\sum_{\alpha=1}^{n}w_{\alpha}x_{\alpha}\right)\cdot\left(\sum_{\beta=1}^{n}w_{\beta}v_{\beta}\right).$
For a given choice of $\tau$, we can consider the set of vectors
$(v_{1},\dots,v_{m})$ such that $\delta S$ is non-negative. From
(\ref{eq: variation entropy super-ideal case}), we can see that,
for each $\boldsymbol{v}_{1},\boldsymbol{v}_{2}$ satisfying this
condition, their positive combinations $c_{1}\boldsymbol{v}_{1}+c_{2}\boldsymbol{v}_{2}$
satisfy it too, and $\boldsymbol{v}$ lies in this cone if and only
if $-\boldsymbol{v}$ does not. Thus, at each point $\boldsymbol{x}$,
it is defined a half-space $\mathcal{H}_{\boldsymbol{x}}\subseteq\mathbb{R}^{m}$
of directions of deformations satisfying the entropy increase principle.
From the previous observation, $(1,\dots,1)$ lies on the boundary
of such a half-space, as can be seen from $\delta S|_{\boldsymbol{v}=(1,\dots,1)}=0$
and the linearity of (\ref{eq: variation entropy super-ideal case})
with respect to $\boldsymbol{v}$. 

With reference to the integral characteristics of the kind (\ref{eq: Bayesian, induced prior}),
we can explicitly carry out its calculation for the super-ideal case
at $n=2$ with variables $(x_{1},x_{2})\equiv(x,y)$. At this purpose,
we choose a general rectangular domain $\Sigma:=[-c;c]\times[-c;c]$,
$c>0$, and get
\begin{align}
\overline{S}_{2}(c):=\int_{\Sigma}Sdxdy= & \int_{\Sigma}\ln(e^{x}+e^{y})dxdy-\int_{D}\frac{e^{x}x}{e^{x}+e^{y}}dxdy-\int_{D}\frac{e^{y}y}{e^{x}+e^{y}}dxdy\nonumber \\
= & 4c\text{Li}_{2}\left(-e^{2c}\right)-6\text{Li}_{3}\left(-e^{2c}\right)-\frac{2\pi^{2}c}{3}-\frac{9\zeta(3)}{2}\label{eq: integral entropy}
\end{align}
where $\mathrm{Li}_{2}$, $\mathrm{Li}_{3}$, and $\zeta$ denote the dilogarithm, trilogarithm, and zeta function, respectively. From the asymptotic behaviours at $c\rightarrow+\infty$
\begin{align}
\text{Li}_{2}(-e^{2c})\sim & -2\cdot c^{2}-\frac{\pi^{2}}{6}+\mathcal{O}(c^{-2})+\mathcal{O}(e^{-2c}),\label{eq: asymptotic dilogarithm}\\
\text{Li}_{3}(-e^{2c})\sim & -\frac{4\cdot c^{3}}{3}-\frac{\pi^{2}\cdot c}{3}+\mathcal{O}(c^{-2})+\mathcal{O}(e^{-2c})\label{eq: asymptotic trilogarithm}
\end{align}
the integral (\ref{eq: integral entropy}) increases as 
\begin{equation}
\overline{S}_{2}(c)\sim\frac{2\pi^{2}c}{3}-\frac{9\zeta(3)}{2},\quad c\rightarrow\infty.
\label{eq: asymptotic behaviour integral entropy}
\end{equation}
\begin{figure}[tph]
\centering{}\includegraphics[scale=0.55]{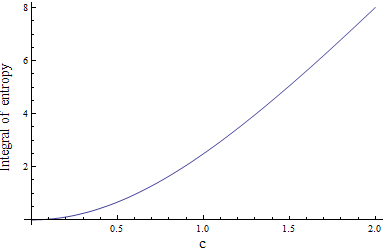}\hfill
\includegraphics[scale=0.55]{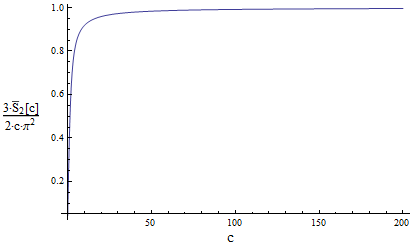}
\caption{Integral of entropy for super-ideal mapping at $n=2$ over rectangular
domain (a) and its ratio to the asymptotic behaviour described in (\ref{eq: asymptotic behaviour integral entropy}) (b).}
\end{figure}

\section{\label{sec: Conclusion} Conclusion }

In this paper, we have discussed different connections between physical
aspects of composite systems and their geometric descriptions. Main
attention has been paid to the variations of the functions $f_{\alpha}$
associated with a generalized free energy and their relation with
the second law of thermodynamics. Both deformations induced by a set
of control variables $\boldsymbol{x}\in\mathbb{R}^{n}$, which mimic
quasi-static variations within the same statistical system, and variations
of the functions $f_{\alpha}\mapsto f_{\alpha}+\delta f_{\alpha}$,
which change the statistical model, have been considered. The variation
of associated geometric quantities (Riemann curvature tensor, Gauss-Kronecker
curvature) have been provided, and a more detailed discussion on the
variation of entropy has been carried out. While general variations
have been considered, a characterization of the basic structure of
the fundamental statistical mapping (\ref{eq: basic free energy formula})
has been given. Ideal, especially linear and super-ideal cases, have
been discussed more extensively, also considering global characteristics
and their interpretation in a Bayesian framework. Finally, a formulation
of the stationary entropy condition in terms of discrete replicator
dynamics for Gibbs' weights has been discussed. 

This study indicates the way for possible further investigations,
which mainly concern the deviation from standard, equilibrium
statistical models. For instance, the exploration of nonlinearities
in functions $f_{\alpha}$, and their effects on Gauss-Kronecker curvature
and on principal curvatures can be explored, also considering some
specific physical systems that manifest non-homogeneity, small-size
effects, or non-equilibrium phenomena. Likewise, the graph-theoretic interpretation
of the dependence of statistical weights on functions $f_{\alpha}$ 
can be studied in more generality. This relates to the condition (\ref{eq: same expectation values})
leading to the discrete replicator equation. Thus, different weights
on the graph with Laplacian (\ref{eq: constrained Laplacian matrix for graph}),
and different evolutions (\ref{eq: discrete replicator dynamics})
may be useful to describe models out of the equilibrium. 

The maximum entropy condition has played a central role in the present
considerations. However, the relation with other geometric structures
that describe the statistical properties of the physical system have
to be investigated. In this direction, transformations acting on the
statistical amoebas \cite{AK2018} and the effect on nonlinear deformations
on the multiscale tropical limit \cite{AK2016} will be considered
too.

\section*{\label{sec: Appendix A} Appendix A: Proof of Proposition \ref{prop: fundamental statistical map from balancing assumption}}
\begin{proof}
From (\ref{eq: ansatz for metric, 2}), we get 
\begin{equation}
\frac{\partial\log h_{\alpha}}{\partial f_{\beta}}=\delta_{\alpha\beta}-h_{\beta}.\label{eq: log h derivative}
\end{equation}
Let us define 

\begin{equation}
l_{\alpha\beta}:=\log h_{\alpha}-\log h_{\beta},\quad\alpha,\beta=1,\dots,m.\label{eq: difference log prob}
\end{equation}
At $\alpha\neq\gamma\neq\beta$ one has ${\displaystyle \frac{\partial}{\partial f_{\gamma}}l_{\alpha\beta}=0}$,
so $l_{\alpha\beta}=l_{\alpha\beta}(f_{\alpha},f_{\beta})$ only depends
on $f_{\alpha}$ and $f_{\beta}$. Furthermore, at $\alpha=\beta$
one has $l_{\alpha\beta}=0$, so we focus on the case $\alpha\neq\beta$.
From (\ref{eq: log h derivative}) one easily gets 
\[
\frac{\partial l_{\alpha\beta}}{\partial f_{\alpha}}=1,\quad{\displaystyle \frac{\partial l_{\alpha\beta}}{\partial f_{\beta}}=}-1,
\]
so we find 
\begin{equation}
{\displaystyle {\displaystyle l_{\alpha\beta}=f_{\alpha}-f_{\beta}+c_{\alpha\beta}}}\label{eq: logarithm form, derivation}
\end{equation}
for some constants $c_{\alpha\beta}$. Clearly, $c_{\alpha\beta}=-c_{\beta\alpha}$
and 
\begin{align}
0= & l_{\alpha\beta}+l_{\beta\gamma}+l_{\gamma\alpha}\nonumber \\
= & c_{\alpha\beta}+c_{\beta\gamma}+c_{\gamma\alpha}\label{eq: closure c_ij}
\end{align}
for all pairwise distinct $\alpha,\beta,\gamma=1,\dots,m$. Let $\sigma_{\alpha}:=c_{1\alpha}$
for all $\alpha=1,\dots,m$, where $\sigma_{1}=c_{1,1}=0$. Then $c_{1\alpha}=-c_{\alpha1}=\sigma_{\alpha}-\sigma_{1}$
and $c_{\alpha\beta}=-c_{1\alpha}-c_{\beta1}=c_{1\beta}-c_{1\alpha}=\sigma_{\beta}-\sigma_{\alpha}$
at $\alpha\neq1\neq\beta$. So one has $c_{\alpha\beta}=\sigma_{\beta}-\sigma_{\alpha}$
for all $\alpha,\beta$. Thus (\ref{eq: logarithm form, derivation})
implies that ${\displaystyle \log h_{\alpha}-\log h_{\beta}=f_{\alpha}-f_{\beta}+\sigma_{\beta}-\sigma_{\alpha}}$.
Then the quantity 
\begin{equation}
H:=e^{\sigma_{\alpha}-f_{\alpha}}h_{\alpha}\label{eq: reciprocal of quasi-free energy}
\end{equation}
is independent on $\alpha$. In order to get the compatibility of
(\ref{eq: ansatz for metric, 2}) and (\ref{eq: reciprocal of quasi-free energy}),
one has 
\begin{align}
\frac{\partial h_{\alpha}}{\partial f_{\beta}}= & \delta_{\alpha\beta}H\cdot\exp(f_{\alpha}-\sigma_{\alpha})+\frac{\partial H}{\partial f_{\beta}}\exp(f_{\alpha}-\sigma_{\alpha})\nonumber \\
= & \delta_{\alpha\beta}H\cdot\exp(f_{\alpha}-\sigma_{\alpha})-H^{2}\cdot\exp(f_{\alpha}-\sigma_{\alpha}+f_{\beta}-\sigma_{\beta})\label{eq: condition of F, derivation}
\end{align}
which implies that 
\[
\frac{\partial}{\partial f_{\beta}}H^{-1}=\exp(f_{\beta}-\sigma_{\beta}).
\]
Hence one gets $H^{-1}=\exp(f_{1}-\sigma_{1})+H_{1}(f_{2},\dots,f_{m})^{-1}$
for a function $H_{1}$ of $f_{2},\dots,f_{m}$, which leads to 
\[
\frac{\partial}{\partial f_{2}}\left(\frac{1}{H}\right)=\frac{\partial}{\partial f_{2}}\left[\frac{1}{H_{1}(f_{2},\dots,f_{m})}\right]=\exp(f_{2}-\sigma_{2}),
\]
so $H^{-1}=\exp(f_{1}-\sigma_{1})+\exp(f_{2}-\sigma_{2})+H_{2}(f_{3},\dots,f_{m})^{-1}$.
Iterating this process, one finds $\tilde{F}:=H^{-1}=\gamma+\sum_{\alpha=1}^{m}\exp(f_{\alpha}-\sigma_{\alpha})$
where $\gamma$ is a constant, so 

\begin{equation}
{\displaystyle h_{\alpha}=\frac{\partial\tilde{F}}{\partial f_{\alpha}}=\frac{\exp(f_{\alpha}-\sigma_{\alpha})}{\gamma+\sum_{k=1}^{d}\exp(f_{\alpha}-\sigma_{\alpha})}},\quad\alpha=1,\dots,m.\label{eq: general explicit ansatz}
\end{equation}
\end{proof}

\end{document}